\documentclass[11pt]{amsart}

\usepackage{microtype}

\usepackage[english]{babel}

\usepackage[hmarginratio=1:1,top=32mm,columnsep=20pt]{geometry}
\usepackage[font=it]{caption}
\usepackage{paralist}
\usepackage{multicol}

\usepackage[flushleft]{threeparttable}
\usepackage{graphicx}

\usepackage{lingmacros}
\usepackage{tree-dvips}

\usepackage{booktabs} 
\usepackage{mathtools}
\usepackage{amsthm}
\usepackage{amsfonts}
\usepackage{amsmath}
\usepackage{amssymb}
\usepackage{bm}
\usepackage{enumitem}
\usepackage{calligra}

\usepackage{todonotes}

\usepackage{algorithm}
\usepackage{algorithmicx}
\usepackage{algpseudocode}

\theoremstyle{definition}

\newtheorem{theo}{Theorem}
\newtheorem{prop}{Proposition}
\newtheorem{lemma}{Lemma}

\newcommand{\R}{\mathbb{R}}

\newcommand{\algr}{\mathcal{R}}
\newcommand{\algrbar}{\overline{\mathcal{R}}}
\newcommand{\algrbard}{{\overline{\mathcal{R}}}_{\mathcal{D}}}

\newcommand{\floor}[1]{\left\lfloor#1\right\rfloor}
\newcommand{\ceil}[1]{\left\lceil#1\right\rceil}

\newcommand{\Prob}[1]{\Pr\left[#1\right]}
\newcommand{\bigo}[1]{\mathcal{O}\left(#1\right)}
\newcommand{\bigomega}[1]{\Omega\left(#1\right)}
\newcommand{\set}[1]{\left\{#1\right\}}
\newcommand{\abs}[1]{\left|#1\right|}

\newcommand{\Def}{:=}

\newcommand{\e}[2]{\left(#1,#2\right)}

\DeclareMathOperator{\Exp}{Exp}

\newcounter{proofsection}

\begin{document}

\title{Randomization and Quantization for Average Consensus
}

\author{Bernadette Charron-Bost and Patrick Lambein-Monette}
\thanks{The authors are with the computer science laboratory at the \'Ecole
	polytechnique, Palaiseau, France.
}
\thanks{E-mails: \texttt{\{charron,patrick\}@lix.polytechnique.fr}
}

\maketitle

\begin{abstract}
A variety of problems in distributed control involve a networked system of
	autonomous agents cooperating to carry out some complex task in a
	decentralized fashion, e.g., orienting a flock of drones, or aggregating data
	from a network of sensors.
Many of these complex tasks reduce to the computation of a global function of
	values privately held by the agents, such as the maximum or the average.
Distributed algorithms implementing these functions should rely on limited
	assumptions on the topology of the network or the information available to the
	agents, reflecting the decentralized nature of the problem.

We present a randomized algorithm for computing the average in networks with
	directed, time-varying communication topologies.
With high probability, the system converges to an estimate of the average in
	linear time in the number of agents, provided that the communication
	topology remains strongly connected over time.
This algorithm leverages properties of exponential random variables, which
	allows for approximating sums by computing minima.
It is completely decentralized, in the sense that it does not rely on agent
	identifiers, or global information of any kind.
Besides, the agents do not need to know their out-degree; hence, our
	algorithm demonstrates how randomization can be used to circumvent the
	impossibility result established in~\cite{DBLP:conf/allerton/HendrickxT15}.

Using a logarithmic rounding rule, we show that this algorithm can be used under
	the additional constraints of finite memory and channel capacity.
We furthermore extend the algorithm with a termination test, by which the
	agents can \emph{decide irrevocably} in finite time\,---\,rather than simply
	converge\,---\,on an estimate of the average.
This terminating variant works under asynchronous starts and yields linear
	decision times while still using quantized\,---\,albeit larger\,---\,values.
\end{abstract}

\section{Introduction}
The subject of this paper is the \emph{average consensus} problem.
We fix a finite set~$V$ of~$n$ autonomous agents.
Each agent~$u$ has a scalar input value~$\theta_u \in \R$, and all agents
	cooperate to estimate, within some error bound~$\varepsilon$, the
	average~$\theta \Def (1/n) \sum_{u \in V} \theta_u$ of the inputs.
They do so by maintaining a local variable~$x_u$, which they drive
	close to the average~$\theta$ by exchanging messages with neighboring agents.
An algorithm achieves average consensus if, in each of its
	executions, each variable~$x_u$ gets sufficiently close to the
	average~$\theta$, namely $x_u \in [\theta-\varepsilon, \theta+\varepsilon]$,
	after a finite number of computation steps.

The study of this problem is motivated by a wide array of practical distributed
	applications, which either directly reduce to computing the average of
	well-chosen values, like sensor fusion~\cite{DBLP:conf/ipsn/XiaoBL05,
		DBLP:journals/jsac/CarliCSZ08, DBLP:journals/siamco/FagnaniFR14},
	or use average computation as a key subroutine, like load
	balancing~\cite{DBLP:journals/tac/NedicOOT09,
		DBLP:journals/mst/MuthukrishnanGS98}.
Other examples of such applications include
	formation control~\cite{DBLP:journals/pieee/Olfati-SaberFM07},
	distributed optimization~\cite{DBLP:journals/tac/NedicOOT09}, and task
	assignment~\cite{DBLP:journals/trob/ChoiBH09}.

\subsection{Contribution}
In this paper, our focus is on the design of efficient algorithms for average
	consensus.
Specifically, we are concerned with the \emph{convergence time}, defined as
	the number of communication phases needed to get each~$x_u$ within the 
	range~$[\theta-\varepsilon, \theta+\varepsilon]$.
It clearly depends on various parameters, including the input
	values, the error bound~$\varepsilon$, connectivity properties of the network,
	and the number~$n$ of agents.

The main contribution of this paper is a linear time algorithm in~$n$ that
	achieves average consensus in a networked system of anonymous agents, i.e.,
	without identifiers, with a time-varying topology that is continuously
	strongly connected.
It is a \emph{Monte Carlo} algorithm in the sense that the agents make use of
	private random oracles and may compute a wrong estimate of the average, but
	with a typically small probability.
We do not assume any stability or bidirectionality of the communication links,
	nor do we provide the agents with any global knowledge (like a bound on the
	size of the network) or knowledge of the number of their out-neighbors.
We also show how, by adding an initial quantization phase, we can make the
	memory and bandwidth requirements grow with~$n$ only as~$\log\log\,n$.

This is to be considered in the light of the impossibility result stated
  in~\cite{DBLP:conf/allerton/HendrickxT15}: deterministic, anonymous agents
	communicating by broadcast and without knowledge of their out-neighbors cannot
	compute functions dependent on the multiplicity or the order of the input
	values.
In particular, computing the average in this context requires providing the
  agents with knowledge of their out-degrees (or some equivalent information),
  or centralized control in the form of agent identifiers.
In contrast, our algorithm shows that, using randomization, computing the
	average can be done in a purely decentralized fashion, without using the
	out-degrees, even on a time-varying communication topology.

\subsection{Related works}

Average consensus is a specific case of the general \emph{consensus} problem,
	where the agents only need to agree on any value in the range of the input
	values.
Natural candidates to solve this general problem are \emph{convex combination
	algorithms}, where each agent repeatedly broadcasts its latest
	estimate~$x_u(t-1)$, and then picks~$x_u(t)$ in the range of its incoming
	values.
For agent~$u$ at time~$t$, this takes the form of the update rule
\begin{align}
	x_u(t) \leftarrow \sum_{v \in V} a_{uv}(t)x_v(t-1) \label{eq:dynfsto}
\end{align}
	where~$u$ sets the weight~$a_{uv}(t)$ to~$0$ if it has not received
	agent~$v$'s value at round~$t$.
To specify a convex combination algorithm amounts to describing how each
	agent~$u$ selects the positive weights~$a_{uv}(t)$.
The evolution of the system is then determined by the initial values~$\theta_u$
	and the stochastic weight
	matrices~$A(t) \Def {\left[a_{uv}(t)\right]}_{u, v \in V}$.

Convex combination algorithms have been extensively studied; see
	e.g.,~\cite{DBLP:phd/ndltd/Tsitsiklis84,
		DBLP:journals/tac/NedicOOT09,
		DBLP:journals/corr/Olshevsky14a,
		DBLP:journals/tac/JadbabaieLM03,
		DBLP:journals/tac/Moreau05,
		DBLP:journals/tac/Olfati-SaberM04,
		DBLP:journals/siamrev/OlshevskyT11,
		DBLP:journals/scl/XiaoB04,
		DBLP:conf/icalp/Charron-BostFN15,
	DBLP:journals/mst/MuthukrishnanGS98}.
The estimates~$x_u(t)$ have been shown to converge to a common limit under
	various assumptions of agent influence and network
	connectivity~\cite{DBLP:journals/siamco/CaoMA08a, DBLP:journals/tac/Moreau05}.
Unfortunately, convex combination algorithms suffer from poor time complexity in
	general: as shown in~\cite{DBLP:journals/tac/OlshevskyT13,
	DBLP:conf/icalp/Charron-BostFN15}, they may exhibit an exponentially large
	convergence time, even on a fixed communication topology.

However, some specific convex combination algorithms are known to converge in
	polynomial time, e.g.,~\cite{DBLP:journals/siamrev/OlshevskyT11,
	DBLP:journals/tac/NedicOOT09,
	DBLP:journals/siamco/Chazelle11}.
They all have in common that the Perron eigenvectors of the weight matrices are
	constant, which indeed is shown in~\cite{Charron-BostUnpublished} to guarantee
	convergence in~$\bigo{n^2/\alpha}$ if there exists a positive lower
	bound~$\alpha$ on all positive weights and if the network is continuously
	strongly connected.
Polynomial bounds are essentially optimal, as no convex combination algorithm
	can achieve convergence at an earlier time than $\bigomega{n^2}$ on every
	topology~\cite{DBLP:journals/tac/OlshevskyT11}.

While convex combination algorithms achieve asymptotic consensus with a positive
	lower bound on positive weights and high enough connectivity, the limit is
	only guaranteed to be in the range of the initial values and may be different
	from the average~$\theta$.
For example, the linear-time consensus algorithm
	in~\cite{DBLP:conf/icalp/Charron-BostFN16} works over any dynamic topology
	that is continuously rooted, but it converges to a value that is not equal
	to~$\theta$ in general.
In contrast, the convex combination algorithm
	in~\cite{DBLP:journals/tac/NedicOOT09} computes the average by selecting
	weights such that all the weight matrices~$A(t)$ are doubly stochastic.
To ensure this condition, agents need to collect some non local informations
	about the network, which requires some link stability over time, namely a
	three-round link stability.

In~\cite{DBLP:journals/mst/MuthukrishnanGS98} and subsequent works, 
	agents enrich the update rule~(\ref{eq:dynfsto}) with a second order term:
	\[
		x_u(t) \leftarrow
			\beta \sum_{v \in V} a_{uv}(t)x_v(t-1) + (1 - \beta) x_u(t-2).
	\]
The parameter~$\beta$ is usually a function of the spectral values of the
 	communication graph, which are hard to compute in a distributed fashion.
A notable exception is the algorithm proposed by Olshevsky
	in~\cite{DBLP:journals/corr/Olshevsky14a}, where the weights are locally
	computable, and the second order factor only depends on a bound~$N$ on the
	size~$n$ of the network.
Unfortunately, the latter algorithm assumes a fixed bidirectional topology and
	an initial consensus since all agents must agree on the bound~$N$.
Moreover, its time complexity is linear in~$N$, which may be arbitrarily large.

Other quadratic or linear-time average consensus algorithms elaborating on the
	update rule~(\ref{eq:dynfsto}) have been proposed in~\cite{sundaram2007finite,
		DBLP:journals/automatica/YuanSSBG13, DBLP:conf/amcc/Kibangou12,
	DBLP:journals/automatica/HendrickxJOV14}.
All of these actually solve a stronger problem, in that they achieve consensus
	on the exact average~$\theta$ in finite time, but they are computationally
	intensive and highly sensitive to numerical imprecisions.
Moreover, they are designed in the context of a fixed topology and some
	centralized control.

Average consensus algorithms built around the update rule~(\ref{eq:dynfsto})
	typically require bidirectional communication links with some stability and
	assume that agents have access to global information.
This is to be expected, as they operate by broadcast and over anonymous
	networks, and thus have to bypass the impossibility result
	in~\cite{DBLP:conf/allerton/HendrickxT15}: they do so through the use, at
	least implicitly, of the out-degree of the agents.

Another example is to be found in the \emph{Push-Sum}
	algorithm~\cite{DBLP:conf/focs/KempeDG03} in which agents make explicit use of
	their out-degrees in the messages they send.
This method converges on fixed strongly connected
	graphs~\cite{DBLP:conf/cdc/Dominguez-GarciaH11a}, and on continuously
	strongly connected dynamic graphs~\cite{DBLP:journals/corr/abs-1709-08765}.


Another way to circumvent the impossibility result
	in~\cite{DBLP:conf/allerton/HendrickxT15} consists in assuming unique agent
	identifiers: by tagging each initial value~$\theta_u$ with $u$'s identifier,
	at each step of the \emph{Flooding} algorithm, the agents can compute the
	average of the input values that they have heard of so far, and thus compute
	the global average~$\theta$ after~$n -1$ communication steps when the topology
	is continuously strongly connected.
Unfortunately, the price to pay in this simple average consensus algorithm is
	messages of size in~$\bigo{n\log n}$ bits. 
By repeated leader election on the basis of agent identifiers and a shared
	bound on the network diameter, the quadratic time algorithm
	in~\cite{DBLP:journals/tcns/OlivaSH17} also achieves average consensus, using
	message and memory size in only~$\bigo{\log n}$ bits with a fixed, strongly
	connected network.

Our approach is dramatically different from the above sketched ones:
	equipping each agent with private random oracles enables them to estimate
	the average with neither central control nor global information.
In particular, our algorithm requires no global clock, and agents may start
	asynchronously.
Communication links are no more assumed to be bidirectional and they may change
	arbitrarily over time, provided that the network remains permanently strongly
	connected.

Our algorithm leverages the fact that the minimum function can be easily
	computed in this general setting.
By individually sampling exponential random distributions with adequate rates
	and then by computing the minimum of the so-generated random numbers, agents
	can estimate the sum of initial values and the size of the network, yielding
	an estimate of the average.
This approach was first introduced in~\cite{DBLP:conf/podc/Mosk-AoyamaS06} for a
	gossip communication model, and later applied
	in~\cite{DBLP:conf/stoc/KuhnLO10} to the design of distributed counting
	algorithms in networked systems equipped with a global clock
	that delivers synchronous start signals to the agents.
	
The main features of some of the average consensus algorithms discussed above,
	including our own randomized algorithm, denoted~$\algrbar$, are summarized in
	Table~1.

\begin{table}
	\caption{Average consensus algorithms with continuous strong connectivity}\label{table:comparison}
	\centering
    \begin{threeparttable}
      \begin{tabular}{p{2.2cm} c c p{7cm}}
        \toprule

        Algorithm
        & Time
        & Message size
        & Restrictions

        \\
        \midrule

        Flooding *
        & $\bigo{n}$
        & $\bigo{n \log n}$
        & non anonymous network

         \\
        Ref.~\cite{DBLP:journals/tac/NedicOOT09}
        & $\bigo{n^2 }$
        & $\bigo{1 }$
        & bidirectional topology \newline link stability over three rounds

         \\
        Ref.~\cite{DBLP:journals/automatica/YuanSSBG13} *
        & $\bigo{n}$
        & $\infty$
        & fixed and bidirectional topology \newline computationally intensive

         \\
        Ref.~\cite{DBLP:journals/corr/Olshevsky14a}
        & $\bigo{N }$
        & $\infty$
        & fixed and bidirectional topology \newline $N \geq n$, known by all agents 

          \\
        Ref.~\cite{DBLP:journals/tcns/OlivaSH17} *
        & $\bigo{nD}$
        & $\bigo{\log n}$
        & fixed topology $G$ \newline $D \geq diam(G)$, known by all agents
					\newline non anonymous network

         \\
	Algorithm~$\overline{\mathcal{R}} $
        & $\bigo{ n }$
        & $\bigo{\log\log n } $
        & Monte Carlo algorithm

        \\
        \bottomrule
      \end{tabular}
      \begin{tablenotes}
        \small
        \item * These algorithms compute the exact average
      \end{tablenotes}
    \end{threeparttable}
\end{table}

\subsection{Quantization}

Most average consensus algorithms require agents to store and transmit real
	numbers.
This assumption is unrealistic in digital systems, where agents have finite
	memory and communication channels have finite capacity.
These constraints entail agents to use only quantized values.

Convex combination algorithms are not, in general, robust to quantization.
However, those that compute the average using doubly stochastic influence
	matrices have been shown to degrade gracefully under several specific rounding
	schemes, either deterministic~\cite{DBLP:journals/tac/NedicOOT09}, where the
	degradation induced by rounding is bounded, or
	randomized~\cite{DBLP:journals/tsp/AysalCR08}, where the expected average in
	the network is kept constant.

Other methods elaborating on the update rule~(\ref{eq:dynfsto}) have not, in
	general, been shown to behave well under rounding, like the second-order
	algorithm in~\cite{DBLP:journals/siamco/Olshevsky17}, or of the various
	protocols in~\cite{DBLP:conf/amcc/Kibangou12,
		DBLP:journals/automatica/HendrickxJOV14,
		sundaram2007finite,
	DBLP:journals/automatica/YuanSSBG13}.
In this context, one important feature of our algorithm is that it can be
	adapted to work with quantized values, following a logarithmic rounding scheme
	similar to the one in~\cite{Oshman12}.
With this rounding rule, each quantized value can be represented using
	$\bigo{\log\log n}$ bits.

\subsection{Irrevocable decisions}

The specification of average consensus can be strengthened by requiring
	that agents \emph{irrevocably decide} on some good estimates of the
	average~$\theta$ in finite time.
In other words, agents are required to detect when consensus on~$\theta$ has
	been reached within a given error margin.
This is desirable for many applications, e.g., when the average consensus
	algorithm is to be used as a subroutine that returns an estimate of~$\theta$
	to the calling program.
Various decision strategies have been developed for fixed topologies,
	e.g.,~\cite{yadav2007distributed, DBLP:conf/isccsp/ManitaraH14,
	DBLP:conf/allerton/ManitaraH13}.

Here, we design a decision test that uses the approximate value of~$n$ computed
	on-line and that incorporates the randomized \emph{firing scheme} developed
	in~\cite{DBLP:conf/stacs/Charron-BostM18} to tolerate asynchronous starts.
In this way, we show that the agents may still safely decide in linear time, but
	at the cost of larger messages.
Moreover, it achieves exact consensus since all the agents decide
	on the \emph{same} estimate of~$\theta$.

\subsection*{Organization}
The rest of this paper is organized as follows:
	in Section~\ref{sec:preliminaries}, we introduce our computational model and
	present some preliminary technical lemmas;
	we present our main algorithm in Section~\ref{sec:algorithm}, its
	quantized version in Section~\ref{sec:quantization}, and its variant with
	decision tests in Section~\ref{sec:decision}.
	Finally, we present concluding remarks in Section~\ref{sec:conclusion}.
\section{Preliminaries}\label{sec:preliminaries}
\subsection{Computation model}
We consider a networked system with a finite set~$V$ of~$n$ agents, and
  assume a distributed computational model in the spirit of the Heard-Of
  model~\cite{DBLP:journals/dc/Charron-BostS09}.
Computation proceeds in rounds: in a round, each agent broadcasts a message,
  receives messages from some agents, and finally updates its state according to
  some local algorithm.
Rounds are communication closed, in the sense that a message sent at round~$t$
  can only be received during round~$t$.
Communications that occur in a given round~$t$ are thus modeled by a directed
  graph~$G_t \Def \left(V,E_t\right)$: the edge~$\e{u}{v}$ is in~$E_t$ if
  agent~$v$ receives the message sent by agent~$u$ at round~$t$.
Any agent can communicate with itself instantaneously, so we assume a self-loop
  at each node in all of the graphs~$G_t$.

We consider \emph{randomized} distributed algorithms which execute an infinite
  sequence of rounds, and in which agents have access to private and independent
  random oracles.
Thus, an execution of a randomized algorithm is entirely determined by the
  collection of input values, the sequence of directed
  graphs~${\left(G_t\right)}_{t \geq 1}$, called a \emph{dynamic graph}, and the
  outputs of the random oracles.
We assume that the dynamic graph is managed by an \emph{oblivious adversary}
  that has no access to the outcomes of the random oracles.


We design algorithms to compute the average of initial values in a dynamic
  network.
Consider an algorithm where the local variable~$x_u$ is used to estimate the
  average.
We say that an execution of this algorithm
  \emph{$\varepsilon$-computes the average} if there is a round~$t^*$ such that,
  for all subsequent rounds~$t \geq t^*$, all estimates are within
  distance~$\varepsilon$ of the average~$\theta$ of the input values,
  namely~$x_u(t) \in \left[\theta-\varepsilon,\theta+\varepsilon\right]$ for
  all~$u \in V$.
The \emph{convergence time} of this execution is the smallest such round~$t^*$
  if it exists.

\subsection{Directed graphs and dynamic graphs}\label{sec:diygraph}

Let $G \Def (V,E)$ be a directed graph, with a finite set of nodes~$V$ of
  cardinality~$n$ and a set of edges~$E$.
There is a \emph{path} from~$u$ to~$v$ in~$G $ either if $\e{u}{v} \in E$, or if
  	$\e{u}{w} \in E$ and there is a path from~$w$ to~$v$.
If every pair of nodes is connected by a path, then~$G$ is said to be
  \emph{strongly connected}.
The dynamic graph~${\left(G_t\right)}_{t \geq 1}$ is said to be
  \emph{continuously strongly connected} if all the directed graphs~$G_t$ are strongly
  connected.

The \emph{product graph~$G \circ H$} of two directed
  graphs~$G \Def \left(V,E_G\right)$ and~$H \Def \left(V, E_H \right)$ is
  defined as $G \circ H \Def (V, E)$, with
  $E \Def \set{\e{u}{w} \in V \times V : \exists v \in V, \e{u}{v} \in E_G
  \wedge \e{v}{w} \in E_H} $.
Let us recall that the product of~$n-1$ directed graphs on~$V$ that are all
  strongly connected and have self-loops at each node is the complete graph.
It follows that, in every execution of the algorithm~\emph{Min}\,---\,given in
  Algorithm~\ref{alg:min}\,---\, over a continuously strongly connected dynamic graph, all
  agents have computed the smallest of the input values at the end of~$n-1$
  rounds.
The algorithm~\emph{Min} is a fundamental building block of our average
  consensus algorithms, and the latter observation will drive their convergence times.

\begin{algorithm}
  \caption{The algorithm \emph{Min}, code for agent~$u$}~\label{alg:min}
  \begin{algorithmic}[1]
    \State{\textbf{Input:} $\theta_u \in \R$
    }
    \State{$x_u \gets \theta_u$
    }\label{alg:min:init}
    \For{$t = 1, 2, \ldots$}
      \State{Send~$x_u$.
      }
      \State{Receive~$x_{v_1}, \ldots, x_{v_k}$ from neighbors.
      }
      \State{$x_u \gets \min \set{x_{v_1}, \ldots, x_{v_k}}$
      }\label{alg:min:update}
    \EndFor{}
  \end{algorithmic}
\end{algorithm}

\subsection{Exponential random variables}
For any positive real number~$\lambda$, we denote by $X \sim \Exp(\lambda)$
  that~$X$ is a random variable following an exponential distribution with
  rate~$\lambda$.
One easily verifies the following property of exponential random variables.

\begin{lemma}~\label{lemma:summin}
Let $X_1, \ldots, X_k$ be $k$ independent exponential random variables
	with rates $\lambda_1, \ldots, \lambda_k$, respectively.
Let~$X$ be the minimum of $X_1, \ldots, X_k$.
Then, $X$ follows an exponential distribution with
  rate~$\lambda \Def \sum_{i=1}^k \lambda_i$.
\end{lemma}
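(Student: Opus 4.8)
The plan is to compute the cumulative distribution function of $X$ directly, exploiting independence to factor the survival probability. First I would recall that for a single exponential variable $X_i \sim \Exp(\lambda_i)$, the survival function is $\Prob{X_i > x} = e^{-\lambda_i x}$ for all $x \geq 0$. Since $X = \min\set{X_1, \ldots, X_k}$, the event $\set{X > x}$ holds precisely when every $X_i$ exceeds $x$; that is, $\set{X > x} = \bigcap_{i=1}^k \set{X_i > x}$.

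The key step is then to invoke independence of the $X_i$ to factor this intersection:
\begin{align}
  \Prob{X > x}
    = \prod_{i=1}^k \Prob{X_i > x}
    = \prod_{i=1}^k e^{-\lambda_i x}
    = e^{-\left(\sum_{i=1}^k \lambda_i\right) x}
    = e^{-\lambda x},
\end{align}
where $\lambda \Def \sum_{i=1}^k \lambda_i$. This is exactly the survival function of an exponential distribution with rate $\lambda$, so I would conclude by observing that the distribution function $\Prob{X \leq x} = 1 - e^{-\lambda x}$ characterizes $X \sim \Exp(\lambda)$, which is the claim. For $x < 0$ the statement is trivial since all $X_i$ are almost surely nonnegative.

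I do not anticipate a genuine obstacle here: the result is a standard and elementary consequence of the memoryless structure of exponentials, and the only point requiring any care is the use of independence to split the probability of the intersection into a product. One should be mildly attentive to stating the argument for $x \geq 0$ (where the exponential form holds) separately from the trivial case $x < 0$, and to noting that $\lambda > 0$ since each $\lambda_i > 0$, so that $\Exp(\lambda)$ is well defined. Everything else is routine.
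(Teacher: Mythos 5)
Your proof is correct and is exactly the standard argument the paper has in mind: the paper states this lemma without proof (``one easily verifies\ldots''), and the factorization of the survival function $\Prob{X > x} = \prod_{i=1}^k e^{-\lambda_i x} = e^{-\lambda x}$ via independence is the canonical verification. One trivial remark: your closing appeal to the ``memoryless structure'' is not actually used anywhere in the argument\,---\,only the explicit form of the survival function is\,---\,but this does not affect correctness.
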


The accuracy of our algorithm depends on some parameter~$\ell$ whose value is
  determined by the bound in the following lemma, which is an application of the
  Cram\'er-Chernoff method (see for instance~\cite{Boucheron13}, sections 2.2 and
  2.4).

\begin{lemma}\label{lemma:cramer}
Let $X_1, \ldots, X_{\ell}$ be $\ell$ i.i.d.\ exponential random variables with
  rate $\lambda > 0$, and let $\alpha \in (0,1/2)$.
Then,
  \[
    \Prob{\abs{\frac{X_1 + \cdots + X_{\ell}}{\ell} - \frac{1}{\lambda}}
          \geq \frac{\alpha}{\lambda}}
      \leq 2 \exp \left( - \frac{ \ell \alpha^2}{3}\right) \enspace.
  \]
\end{lemma}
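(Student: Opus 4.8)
The plan is to apply the Cramér–Chernoff method separately to the two tails and combine them with a union bound. First I would reduce to the standard exponential distribution: setting $Y_i \Def \lambda X_i$, the $Y_i$ are i.i.d.\ with $Y_i \sim \Exp(1)$, and the event in question is exactly $\abs{\bar Y - 1} \geq \alpha$, where $\bar Y \Def (Y_1 + \cdots + Y_{\ell})/\ell$. This eliminates the parameter~$\lambda$, so it suffices to bound each of $\Prob{\bar Y \geq 1 + \alpha}$ and $\Prob{\bar Y \leq 1 - \alpha}$ by $\exp(-\ell\alpha^2/3)$.

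For each tail I would use the exponential moment bound. Writing $T \Def \sum_i Y_i$ and recalling that $\Ex{e^{sY_i}} = 1/(1-s)$ for $s < 1$, Markov's inequality applied to $e^{sT}$ gives, for the upper tail and any $s \in (0,1)$,
\[
  \Prob{T \geq \ell(1+\alpha)} \leq e^{-s\ell(1+\alpha)}{(1-s)}^{-\ell},
\]
and, using $\Ex{e^{-sY_i}} = 1/(1+s)$, for the lower tail and any $s > 0$,
\[
  \Prob{T \leq \ell(1-\alpha)} \leq e^{s\ell(1-\alpha)}{(1+s)}^{-\ell}.
\]
Optimizing the exponents over~$s$ is a short calculus exercise: the upper-tail exponent is minimized at $s = \alpha/(1+\alpha)$, yielding $\ell\bigl(\log(1+\alpha) - \alpha\bigr)$, and the lower-tail exponent at $s = \alpha/(1-\alpha)$, yielding $\ell\bigl(\alpha + \log(1-\alpha)\bigr)$.

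It then remains to check that both exponents are at most $-\ell\alpha^2/3$. The lower-tail inequality $\alpha + \log(1-\alpha) \leq -\alpha^2/3$ is immediate from the series $\log(1-\alpha) = -\sum_{k \geq 1}\alpha^k/k$, which already gives $\alpha + \log(1-\alpha) \leq -\alpha^2/2$ for all $\alpha \in (0,1)$. The upper-tail inequality $\log(1+\alpha) - \alpha \leq -\alpha^2/3$ is where the hypothesis $\alpha < 1/2$ is actually used, and this is the one delicate point: the naive series estimate $\log(1+\alpha) - \alpha \leq -\alpha^2/2 + \alpha^3/3$ is not uniformly enough, so I would instead set $h(\alpha) \Def \log(1+\alpha) - \alpha + \alpha^2/3$ and compute $h'(\alpha) = \alpha(2\alpha - 1)/\bigl(3(1+\alpha)\bigr)$, which is negative on $(0,1/2)$; since $h(0) = 0$, this forces $h(\alpha) < 0$ throughout the interval, as needed. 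Combining the two tail bounds with a union bound introduces the factor~$2$ and finishes the proof.
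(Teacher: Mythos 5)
Your proof is correct and follows exactly the Cram\'er--Chernoff method that the paper itself invokes without detail (it merely cites Boucheron et al., sections 2.2 and 2.4), so your argument\,---\,standardization to rate~$1$, the two optimized Chernoff exponents $\ell(\log(1+\alpha)-\alpha)$ and $\ell(\alpha+\log(1-\alpha))$, the elementary comparisons with $-\ell\alpha^2/3$, and the union bound\,---\,supplies precisely the intended proof. One minor remark: the series estimate $\log(1+\alpha)-\alpha \leq -\alpha^2/2+\alpha^3/3$ actually does suffice on $(0,1/2)$, since $\alpha^3/3 \leq \alpha^2/6$ there, though your derivative argument for $h(\alpha) = \log(1+\alpha)-\alpha+\alpha^2/3$ is equally valid.
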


\section{Randomized algorithm}\label{sec:algorithm}

In this section, we assume infinite bandwidth channels and infinite storage
  capabilities.
For this model, we present a randomized algorithm~$\algr$ and prove that
  all agents  compute the same value which, with high probability, is a good estimate  
  of the average of the initial values.

The underlying idea is that each agent computes an estimate of~$s$, the sum of
	the input values, and an estimate of~$n$, the size of the network.
They use the ratio of the two estimates  as an estimate of the
  average~$\theta \Def s/n$.

The computations of the estimates of~$s$ and~$n$ are based on
	Lemma~\ref{lemma:summin}: each agent~$u$ samples two random numbers from two
  exponential distributions, with respective rates~$\theta_u$ and~$1$.
Then, the agent~$u$ computes the two global minima of these so-generated random
	numbers in the variables~$X_u$ and~$Y_u$ with the algorithm~\emph{Min}.
As recalled in Section~\ref{sec:diygraph}, this takes at most~$n-1$ rounds when
  the dynamic graph is continuously strongly connected.
Then, $1/X_u$ and $1/Y_u$ provide estimates of respectively~$s$ and~$n$.

The probabilistic analysis requires all  the input values to be at least equal to one.
To overcome this limitation, we assume that the agents know some
	pre-defined interval $[a,b]$ in which all the input values lie and 
	we apply a reduction to the case $a=1$ by simple translations of the inputs. 

We then elaborate on the above algorithmic scheme to decrease the probability of
  incorrect executions, i.e., executions with errors in the estimates that are
  greater than~$\varepsilon$.
We replicate each random variable~$\ell$ times, and each node starts with the
  two vectors $X_u = (X_u^{(1)}, \ldots, X_u^{(\ell)})$ and
  $Y_u =(Y_u^{(1)}, \ldots, Y_u^{(\ell)})$, instead of the sole
  variables~$X_u$ and~$Y_u$.
Using the Cram\'er-Chernoff bound given in Lemma~\ref{lemma:cramer}, we choose
  the parameter~$\ell$ in terms of the maximal admissible error~$\varepsilon$,
  the probability~$\eta$ of incorrect executions, and the amplitude~$b-a$ of the
  input values; namely, we set 
  $\ell \Def \ceil{27 \ln(4/\eta) {(b-a+1)}^2/\varepsilon^2}$.

 The pseudocode of the algorithm~$\algr$ is given in Algorithm~\ref{alg:r}.
\begin{algorithm}
  \caption{The algorithm $\algr$, code for agent~$u$}~\label{alg:r}
  \begin{algorithmic}[1]
    \State{\textbf{Input:} $\theta_u \in \left[a,b\right]$
    }
    \State{$\ell \gets \ceil{27 \ln(4/\eta) {(b - a + 1)}^2/\varepsilon^2}$
    }
    \State{$x_u \gets \bot$
    }
    \State{Generate $\ell$ random numbers
      $\sigma_u^{(1)}, \ldots, \sigma_u^{(\ell)}$
      from an exponential distribution of rate~$\theta_u - a + 1$.
    }
    \State{$X_u \gets (\sigma_u^{(1)}, \ldots, \sigma_u^{(\ell)})$
    }
    \State{Generate $\ell$ random numbers $\nu_u^{(1)}, \ldots, \nu_u^{(\ell)}$
      from an exponential distribution of rate~$1$.
    }
    \State{$Y_u \gets (\nu_u^{(1)}, \ldots, \nu_u^{(\ell)})$
    }
    \State{\textbf{In each round} \textbf{do}}
    \State{Send $\left(X_u,Y_u\right)$.
    }
    \State{Receive $
      \left(X_{v_1},Y_{v_1}\right), \ldots,
      \left(X_{v_k},Y_{v_k}\right)
      $
      from neighbors.
    }
    \For{$i = 1 \ldots \ell$}
    \State{$X_u^{(i)} \gets \min \set{X_{v_1}^{(i)}, \ldots, X_{v_k}^{(i)}}$
    }
    \State{$Y_u^{(i)} \gets \min \set{Y_{v_1}^{(i)}, \ldots, Y_{v_k}^{(i)}}$
    }
    \EndFor{}
    \State{$x_u \gets a - 1 +
      (Y_u^{(1)} + \cdots + Y_u^{(\ell)})/(X_u^{(1)} + \cdots + X_u^{(\ell)})$
    }
  \end{algorithmic}
\end{algorithm}

\begin{theo}~\label{theo:r}
For any real numbers~$\varepsilon \in \left(0,1/2\right)$
	and~$\eta \in \left(0,1/2\right)$, in any continuously strongly connected network, the
  	algorithm~$\mathcal{R}$~$\varepsilon$-computes the average of initial values
		in $\left[a,b\right] $ in at most~$n-1$ rounds with probability at
		least~$1-\eta$.
\end{theo}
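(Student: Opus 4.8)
The plan is to separate the distributed dynamics from the probabilistic estimate. First I would argue that Algorithm~\ref{alg:r} reduces, after $n-1$ rounds, to a static computation of global minima. Each of the $2\ell$ scalar coordinates $X_u^{(i)}, Y_u^{(i)}$ evolves exactly under the update rule of the algorithm~\emph{Min} (Algorithm~\ref{alg:min}), applied coordinatewise. Since the product of $n-1$ strongly connected graphs with self-loops is the complete graph, as recalled in Section~\ref{sec:diygraph}, over any continuously strongly connected dynamic graph every agent holds $X_u^{(i)} = \min_{v \in V}\sigma_v^{(i)}$ and $Y_u^{(i)} = \min_{v \in V}\nu_v^{(i)}$ at the end of round~$n-1$, and these values are frozen thereafter. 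Hence all agents compute the \emph{same} estimate $x_u$, which remains constant from round $n-1$ on; it therefore suffices to prove that this common value lies in $[\theta-\varepsilon,\theta+\varepsilon]$ with probability at least $1-\eta$, which yields a convergence time of at most $n-1$.

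Next I would identify the relevant distributions and translate the target into a concentration statement. Writing $s' \Def \sum_{v \in V}(\theta_v - a + 1)$ and $\theta' \Def s'/n = \theta - a + 1$, and setting $A \Def \frac1\ell\sum_i X^{(i)}$ and $B \Def \frac1\ell\sum_i Y^{(i)}$ with $X^{(i)} \Def \min_v \sigma_v^{(i)}$ and $Y^{(i)} \Def \min_v \nu_v^{(i)}$, the computed estimate is $x_u = a - 1 + B/A$, so membership in the target interval is equivalent to $\abs{B/A - \theta'} \leq \varepsilon$. By Lemma~\ref{lemma:summin}, each $X^{(i)} \sim \Exp(s')$ and each $Y^{(i)} \sim \Exp(n)$; moreover the $X^{(i)}$ are i.i.d.\ across $i$, and likewise the $Y^{(i)}$, since distinct indices $i$ are built from disjoint, independent families of oracle outputs. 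Note that all rates are well defined and positive, as $\theta_v - a + 1 \geq 1$. This is precisely the setting of Lemma~\ref{lemma:cramer}.

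The concentration step then fixes the free parameter. I would set $\alpha \Def \varepsilon/\bigl(3(b-a+1)\bigr)$; since $\varepsilon < 1/2$ and $b - a + 1 \geq 1$, this gives $\alpha \in (0,1/6) \subset (0,1/2)$, which is admissible in Lemma~\ref{lemma:cramer}. Applying that lemma separately to the $X$- and the $Y$-sample and taking a union bound, the event that $A \notin \bigl[(1-\alpha)/s',(1+\alpha)/s'\bigr]$ or $B \notin \bigl[(1-\alpha)/n,(1+\alpha)/n\bigr]$ has probability at most $4\exp(-\ell\alpha^2/3)$. With the prescribed $\ell \geq 27\ln(4/\eta)(b-a+1)^2/\varepsilon^2 = 3\ln(4/\eta)/\alpha^2$, this bound is at most $\eta$, which accounts exactly for the constant $27$ and the factor $(b-a+1)^2$ in the definition of $\ell$.

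The main obstacle is the last step: propagating these two relative errors through the quotient. On the complementary (good) event one gets $B/A \in \bigl[\frac{1-\alpha}{1+\alpha}\theta',\ \frac{1+\alpha}{1-\alpha}\theta'\bigr]$, since the numerator and denominator fluctuate in opposite worst-case directions, hence $\abs{B/A - \theta'} \leq \theta'\cdot\frac{2\alpha}{1-\alpha}$. The delicate point is that $\theta'$ could in principle be small, so one controls it by the \emph{amplitude}, bounding $\theta' \leq b - a + 1$ and $\frac{1}{1-\alpha}\leq 6/5$ for $\alpha \leq 1/6$; this yields $\abs{B/A - \theta'} \leq \frac{12}{5}(b-a+1)\alpha = \frac45\varepsilon < \varepsilon$, as required. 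Confirming that this relative-to-additive conversion meets the additive target $\varepsilon$, rather than only a relative one, is the crux of the argument and is what dictates the precise choice of $\alpha$ and of $\ell$.
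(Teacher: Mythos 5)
Your proposal is correct and follows essentially the same route as the paper's proof: reduce the dynamics to coordinatewise \emph{Min} over $n-1$ rounds, invoke Lemma~\ref{lemma:summin} to identify the minima as $\Exp(s')$ and $\Exp(n)$ variables, apply Lemma~\ref{lemma:cramer} with $\alpha = \varepsilon/3(b-a+1)$ and a union bound, and propagate the two relative errors through the quotient. The only differences are cosmetic: you carry the translation inline via $s'$ and $\theta'$ where the paper says ``assume $a=1$,'' and your quotient bound $\frac{2\alpha}{1-\alpha} \leq \frac{12}{5}\alpha$ is a slightly sharper version of the paper's $1\pm 3\alpha$ sandwich.
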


\begin{proof}
We first introduce some notation.
If~$z_u$ is any variable of node~$u$, we denote by~$z_u(t)$ the value of~$z_u$
	at the \emph{end} of round~$t$.
We let
  \[
    \hat{\sigma}^{(i)}  \Def \min_{u \in V} \sigma_u^{(i)},
    \qquad
    \hat{\nu}^{(i)}     \Def \min_{u \in V} \nu_u^{(i)},
  \]
  and
  \[
    \hat{\sigma}  \Def \sum_{i = 1}^{\ell} \hat{\sigma}^{(i)} / \ell,
    \qquad
    \hat{\nu}     \Def \sum_{i = 1}^{\ell} \hat{\nu}^{(i)}    / \ell,
    \qquad
    \hat{\theta}  \Def \frac{\hat{\nu}}{\hat{\sigma}} \enspace.
  \]

As an immediate consequence of the connectivity assumptions, for each node~$u$
	and each index~$i \in \set{1,\ldots,\ell}$, we
  have~$X_u^{(i)}(t) = \hat{\sigma}^{(i)}$ and~$Y_u^{(i)}(t) = \hat{\nu}^{(i)}$
  at every round~$t \geq n-1$.
Hence, $x_u(t) = a - 1 + \hat{\theta}$ whenever~$t \geq n-1$.

We now show that~$a - 1 + \hat{\theta}$ lies in the admissible range
  $\left[\theta-\varepsilon, \theta+\varepsilon\right]$ with probability at
  least~$1-\eta$.
By considering the translate initial values~$\theta'_u \Def \theta_u - a + 1$
  that all lie in~$\left[1,b-a+1\right]$, we obtain a reduction to the
  case~$a = 1$.

So let us assume that $a=1$.
In this case, $b$ is positive, and we let~$\alpha \Def \varepsilon/3b$.
Since $b \geq a = 1$ and~$\varepsilon \in \left(0,1/2\right)$, we
  have~$\alpha \in \left(0,1/6\right)$.
This implies $1 - 3\alpha < (1 - \alpha)/(1 + \alpha)$ and
  $1 + 3\alpha > (1 + \alpha)/(1 - \alpha)$.
It follows that, if~$\abs{\hat{\sigma} - 1/s}  \leq \alpha/s$ and
  $\abs{\hat{\nu} - 1/n}  \leq \alpha/n$, i.e.,
  \[
    \frac{1}{s} (1-\alpha)  \leq  \hat{\sigma}  \leq \frac{1}{s} (1 + \alpha)
    \ \mbox{ and } \
    \frac{1}{n} (1-\alpha)  \leq  \hat{\nu}     \leq \frac{1}{n} (1 + \alpha)
    \enspace ,
  \]
  then we have
  \[
    \frac{s}{n} (1 - 3\alpha)
    \leq \frac{ \hat{\nu} }{ \hat{\sigma} }
    \leq \frac{s}{n} (1 + 3 \alpha) \enspace ,
  \]
  i.e.,
  \[
    \abs{\hat{\theta} - \theta}  \leq   3 \alpha \theta
    \leq \varepsilon \enspace.
  \]

Specializing Lemma~\ref{lemma:cramer}
  for~$\ell \Def \ceil{27 \ln(4/\eta) b^2/\varepsilon^2}$
  and~$\alpha \Def \varepsilon/3b$, we get
  	\begin{align*}
    	\Prob{\abs{\frac{Z_1 + \cdots + Z_\ell}{\ell} - \frac{1}{\lambda}}
          \geq \frac{\alpha}{\lambda}}
         &  \leq 2 {\left(\frac{\eta}{4}\right)}^{{\left(\frac{3b\alpha}{\varepsilon}\right)}^2}
          = \frac{\eta}{2} \enspace ,
 	 \end{align*}
   where $Z_1, \dots, Z_{\ell}$ are i.i.d.\ exponential random variables of rate
   $\lambda > 0$.
In particular, $\Prob{\abs{\hat{\sigma} - 1/s} \geq \alpha/s} \leq \eta/2$ and
  $\Prob{\abs{\hat{\nu} - 1/n} \geq \alpha/n} \leq \eta/2$ since, by
  Lemma~\ref{lemma:summin}, we have $\hat{\sigma}^{(i)} \sim \Exp(s)$ and
  $\hat{\nu}^{(i)} \sim \Exp(n)$ with~$s$ and~$n$ that are both positive. 
The probability of the union of those two events  is thus less than~$\eta$.
Using the above argument and the fact that $\varepsilon/ b \leq \varepsilon$, we
  conclude that
  \[
    \Prob{\abs{\hat{\theta} - \theta}  \geq  \varepsilon} \leq \eta \enspace,
  \]
  which completes the proof.
\end{proof}

The convergence of the algorithm~$\algr$ in Theorem~\ref{theo:r} is ensured by the assumption of
	continuous strong connectivity of the dynamic graph~${\left(G_t\right)}_{t \geq 1}$:
	the directed graph~$G_1 \circ \cdots \circ \,G_{n-1}$ is complete, and thus the  entries~$Y_u^{(i)}$ and~$X_u^{(i)}$
	hold a global minimum at the end of round~$n-1$.
This connectivity  assumption may be dramatically reduced into \emph{eventual strong connectivity}:
	for each round~$t$, there exists a round~$t'$ such that~$G_t \circ \cdots \circ G_{t'}$ is the complete graph.
Clearly, the algorithm~$\algr$  converges with any dynamic graph that is eventually strongly connected, but
	the finite convergence time is then unbounded.

An intermediate connectivity assumption has been proposed in~\cite{DBLP:conf/stacs/Charron-BostM18}:
	a dynamic graph ${\left(G_t\right)}_{t \geq 1}$ is \emph{strongly connected with delay $T$} if each product of~$T$
	consecutive graphs~$G_t \circ \cdots \circ G_{t+T-1}$ is strongly connected.
Then, the convergence of the algorithm~$\algr$ is still guaranteed, but at the
	price of increasing the convergence time by a factor~$T$.

Conversely, the assumption of continuous strong connectivity can be strengthened in the following way:
	for any positive integer~$c$,
	a dynamic graph~${\left(G_t\right)}_{t \geq 1}$ is \emph{continuously} $c$-\emph{strongly connected} if
	each directed graph~$G_t$ is \emph{$c$-in-connected}, i.e., any non-empty subset~$S \subseteq V$
	has  at least~$\min\set{c, \abs{V\setminus S}}$ incoming neighbors in~$G_t$.
It can be shown that the product of~$\ceil{n/c}$ $c$-in-connected directed graphs is
	complete~\cite{DBLP:conf/stacs/Charron-BostM18}.
Hence, the assumption of continuous $c$-connectivity results in a speedup by a
	factor~$c$.

\section{Quantization}\label{sec:quantization}
In this section, we present a variant of the algorithm~$\algr$ that, as opposed
  to the former, works under the additional constraint that agents can only
  store and transmit quantized values.
 This model is intended for networked systems with communication bandwidth
  and storage limitations.
We incorporate this constraint in our randomized algorithm by requiring each
  agent~$u$ firstly to quantize the random numbers it generates, and secondly to
  broadcast only one entry of each of the two vectors~$X_u$
  and~$Y_u$ in each round.

The quantization scheme consists in rounding values down along a logarithmic
	scale, to the previous integer power of some pre-defined number greater than
	one. 
Exponential random variables, when rounded in this way, continue to follow
	concentration inequalities similar to those of Lemma~\ref{lemma:cramer}.
This makes logarithmic rounding appealing to use in conjunction with the
	algorithm~$\algr$, as we retain control over incorrect executions simply by
	increasing the number~$\ell$ of samples by a constant factor.

This quantization method does not offer an \emph{absolute} bound over the space
  and bandwidth potentially required in the algorithm:
	the generated random numbers may be arbitrarily large or small, and therefore
	the number of quantization levels used in all executions is unbounded.
Instead, we provide a \emph{probabilistic} bound over the number of quantization
	levels required\,---\,that is, a bound that holds with high probability.
All the random numbers that are generated lie in some pre-defined interval~$I$
  with high probability, and hence most executions of our algorithm require a
  pre-defined number~$Q$ of quantization levels.
In each of these ``good'' executions, random numbers can be represented
  efficiently, as~$Q$ grows with~$\log n$.

This probabilistic guarantee for quantization could be turned into an absolute
  one by providing the agents with a bound~$N \geq n$.
This is indeed the rounding scheme developed in~\cite{Oshman12}, 
  where each agent starts with normalizing the random numbers that it generates
  before rounding.
Our quantization method provides a weaker guarantee, but it does not use any
	global information about the network.
	
In the following, our quantized algorithm is denoted~$\algrbar$; its pseudocode
  is given in Algorithm~\ref{alg:rbar}.
It uses the rounding function~$r_\beta : x\in \R_{>0} \mapsto
r_\beta(x) \Def {\left(1 + \beta \right)}^{\floor{\log_{1 + \beta}x}}$, 
	where~$\beta$ is any positive number.
	
We start the correctness proof of~$\algrbar$ with a preliminary lemma that
	gives, for~$X \sim \Exp(\lambda)$, concentration inequalities for the
	logarithmically rounded exponential random variable~$r_\beta(X)$.

\begin{lemma}~\label{lemma:cramer-quant}
Let $X_1, \ldots, X_{\ell}$ be $\ell$ i.i.d.\ exponential random variables with
	rate $\lambda > 0$, and let~$\beta > 0$ and $\alpha \in (0,1/2)$.
Then,
  \[
    \Prob{\abs{\frac{r_\beta(X_1) + \cdots + r_\beta(X_\ell)}{\ell} - \frac{1}{\lambda}}
          \geq \frac{\alpha + \beta + \alpha\beta}{\lambda}}
      \leq 2 \exp \left( - \frac{\ell \alpha^2}{3}\right)
  \]
\end{lemma}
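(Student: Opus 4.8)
The plan is to reduce the statement to Lemma~\ref{lemma:cramer} by exploiting a purely deterministic sandwiching property of the rounding function~$r_\beta$, so that the only randomness used is a single invocation of the unrounded concentration bound. First I would record the elementary estimate that, for every~$x > 0$, setting~$y \Def \log_{1+\beta} x$ and using~$\floor{y} \leq y < \floor{y} + 1$ in the definition~$r_\beta(x) = {(1+\beta)}^{\floor{y}}$ gives
\[
  \frac{x}{1+\beta} < r_\beta(x) \leq x \enspace.
\]
Abbreviating the unrounded average by~$S \Def (X_1 + \cdots + X_\ell)/\ell$ and the rounded average by~$\bar{S} \Def (r_\beta(X_1) + \cdots + r_\beta(X_\ell))/\ell$, I would sum this inequality over the~$\ell$ samples and divide by~$\ell$ to obtain the deterministic sandwich
\[
  \frac{S}{1+\beta} < \bar{S} \leq S \enspace.
\]

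Next I would prove the deterministic implication that concentration of~$S$ forces concentration of~$\bar{S}$ with the inflated deviation~$(\alpha + \beta + \alpha\beta)/\lambda$, the point being the identity~$(1+\alpha)(1+\beta) = 1 + (\alpha + \beta + \alpha\beta)$. Assuming~$\abs{S - 1/\lambda} < \alpha/\lambda$, i.e.~$(1-\alpha)/\lambda < S < (1+\alpha)/\lambda$, the upper bound follows from~$\bar{S} \leq S < (1+\alpha)/\lambda < (1 + \alpha + \beta + \alpha\beta)/\lambda$, where the last step uses~$\beta(1+\alpha) > 0$. For the lower bound I would use~$\bar{S} > S/(1+\beta) > (1-\alpha)/\bigl(\lambda(1+\beta)\bigr)$ together with the elementary verification~$(1-\alpha)/(1+\beta) \geq 1 - (\alpha+\beta+\alpha\beta)$, which after clearing denominators reduces to~$2\alpha\beta + \beta^2 + \alpha\beta^2 \geq 0$ and is therefore automatic. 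Combining the two bounds yields~$\abs{\bar{S} - 1/\lambda} < (\alpha + \beta + \alpha\beta)/\lambda$.

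Finally, I would pass to the contrapositive: the event~$\abs{\bar{S} - 1/\lambda} \geq (\alpha+\beta+\alpha\beta)/\lambda$ is contained in the event~$\abs{S - 1/\lambda} \geq \alpha/\lambda$, and the latter has probability at most~$2\exp(-\ell\alpha^2/3)$ by Lemma~\ref{lemma:cramer} applied directly to the i.i.d.\ exponentials~$X_1, \ldots, X_\ell$ (this is where the hypothesis~$\alpha \in (0,1/2)$ enters). Monotonicity of probability then gives the claimed bound. I do not expect a serious obstacle: the argument is entirely deterministic apart from one appeal to Lemma~\ref{lemma:cramer}, and the only points requiring care are keeping the implications strict in the right places so that the contrapositive lands exactly on the non-strict event matching Lemma~\ref{lemma:cramer}, and the routine algebraic check of~$(1-\alpha)/(1+\beta) \geq 1 - (\alpha+\beta+\alpha\beta)$.
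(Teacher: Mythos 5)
Your proposal is correct and follows essentially the same route as the paper: both rest on the deterministic sandwich $r_\beta(x) \leq x < (1+\beta)r_\beta(x)$, deduce that concentration of the unrounded average within $\alpha/\lambda$ forces concentration of the rounded average within $(\alpha+\beta+\alpha\beta)/\lambda$, and conclude by one appeal to Lemma~\ref{lemma:cramer}. The only difference is cosmetic\,---\,the paper uses a triangle-inequality bound $\abs{Y - 1/\lambda} \leq \beta X + \alpha/\lambda$ where you verify the upper and lower bounds separately, and your handling of strict versus non-strict inequalities is in fact slightly more careful than the paper's.
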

\begin{proof}
Let~$X \Def (X_1 + \cdots + X_\ell)/\ell$ and
  $Y \Def (r_\beta(X_1) + \cdots + r_\beta(X_\ell))/\ell$.
For any~$x > 0$, we have $r_\beta(x) \leq x < (1 + \beta)r_\beta(x)$,
	and hence
  \[
    0 \leq X - Y < \beta Y \leq \beta X \enspace.
  \]

It follows that if $\abs{X - 1/\lambda} \leq \alpha/\lambda$, then
  \begin{align*}
    \abs{Y - 1/\lambda}
    &\leq \abs{Y - X} + \abs{X - 1/\lambda}
    \leq \beta X + \alpha/\lambda\\
    &\leq \left(\alpha + \beta + \alpha\beta\right)/\lambda \enspace.
  \end{align*}
The result follows from the latter inequality and Lemma~\ref{lemma:cramer}.
\end{proof}

\begin{algorithm}
  \caption{The algorithm $\algrbar$, code for agent $u$}~\label{alg:rbar}
  \begin{algorithmic}[1]
    \State{\textbf{Input:} $\theta_u \in \left[a,b\right]$
    }
    \State{$\ell \gets \ceil{108 \ln(8/\eta){(b - a + 1)}^2/\varepsilon^2}$
    }
    \State{$\beta \gets \varepsilon/8(b-a+1)$
    }
    \State{$x_u \gets \bot$
    }
    \State{Generate~$\ell$ random numbers
      $\sigma_u^{(1)} \ldots, \sigma_u^{(\ell)}$ from an exponential
      distribution of rate~$\theta_u - a + 1$.
    }
    \State{}
      $X_u \gets (r_{\beta}(\sigma_u^{(1)}),
      \ldots, r_{\beta}(\sigma_u^{(\ell)}))$
    \State{Generate~$\ell$ random numbers $\nu_u^{(1)} \ldots, \nu_u^{(\ell)}$
      from an exponential distribution of rate~$1$.
    }
    \State{$Y_u \gets (r_{\beta}(\nu_u^{(1)}),
      \ldots, r_{\beta}(\nu_u^{(\ell)}))$
    }
    \State{$i \gets 0$}
    \State{\textbf{In each round}  \textbf{do}}
    \State{$i \gets i + 1$}
    \State{Send $\left(X_u^{(i)},Y_u^{(i)}\right)$.
    }
    \State{Receive $
      \left(X_{v_1}^{(i)}, Y_{v_1}^{(i)}\right), \ldots,
      \left(X_{v_k}^{(i)}, Y_{v_k}^{(i)}\right)
      $
      from neighbors.
    }
    \State{$X_u^{(i)} \gets \min \set{X_{v_1}^{(i)}, \ldots, X_{v_k}^{(i)}}$
    }
    \State{$Y_u^{(i)} \gets \min \set{Y_{v_1}^{(i)}, \ldots, Y_{v_k}^{(i)}}$
    }
    \If{$i = \ell$}
    \State{$x_u \gets a - 1 +
      (Y_u^{(1)} + \cdots + Y_u^{(\ell)})/(X_u^{(1)} + \cdots + X_u^{(\ell)})$
    }
    \State{$i \gets 0$}
    \EndIf{}
  \end{algorithmic}
\end{algorithm}

\begin{prop}~\label{prop:rbar}
For any real numbers~$\varepsilon \in \left(0,1/2\right)$
  and~$\eta \in \left(0,1/2\right)$, in any continuously strongly connected network, the
  algorithm~$\algrbar$ $\varepsilon$-computes the average of initial values that
  all lie in~$\left[a,b\right]$ with probability at least~$1 - \eta/2$ in at
  most~$\ell n$ rounds.
\end{prop}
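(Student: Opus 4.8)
The plan is to mirror the proof of Theorem~\ref{theo:r}, while handling the two features that distinguish $\algrbar$ from $\algr$: the round-robin broadcast of a single coordinate per round, and the logarithmic rounding by $r_\beta$.

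First I would settle the timing. Since agent~$u$ transmits only the $i$-th coordinate in rounds $i, i+\ell, i+2\ell, \dots$, coordinate $i$ evolves exactly as in the algorithm \emph{Min} run over the stride-$\ell$ subsequence of graphs $G_i, G_{i+\ell}, G_{i+2\ell}, \dots$. Each of these is strongly connected and carries self-loops, so the product of any $n-1$ of them is complete; hence after its $(n-1)$-th update\,---\,at round $i + (n-2)\ell$\,---\,the entry $X_u^{(i)}$ equals the global minimum $\min_{v \in V} r_\beta(\sigma_v^{(i)})$ (and symmetrically $Y_u^{(i)} = \min_{v} r_\beta(\nu_v^{(i)})$), and never changes afterwards. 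Taking the worst case $i = \ell$ shows that by round $(n-1)\ell \le \ell n$ all coordinates have stabilized, so $x_u$ takes its final value at that round and at every subsequent multiple of $\ell$, which yields the $\ell n$ bound.

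Next comes the probabilistic estimate. As before I would reduce to $a = 1$ by translating the inputs, and define $\hat{\sigma}^{(i)}, \hat{\nu}^{(i)}$ to be the stabilized coordinate values, $\hat{\sigma}, \hat{\nu}$ their averages over $i$, and $\hat{\theta} \Def \hat{\nu}/\hat{\sigma}$. The one genuinely new observation is that $r_\beta$ is non-decreasing, so $\min_v r_\beta(\sigma_v^{(i)}) = r_\beta(\min_v \sigma_v^{(i)})$; by Lemma~\ref{lemma:summin} the inner minima are i.i.d.\ $\Exp(s)$ across $i$, hence $\hat{\sigma}$ is precisely the average of $\ell$ rounded exponentials to which Lemma~\ref{lemma:cramer-quant} applies, and likewise $\hat{\nu}$ with rate $n$. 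I would then set $\alpha \Def \varepsilon/6(b-a+1)$ and recall $\beta \Def \varepsilon/8(b-a+1)$, so that $\gamma \Def \alpha + \beta + \alpha\beta$ plays the role the bare $\alpha$ played in Theorem~\ref{theo:r}. A short computation gives $\gamma \le \varepsilon/3(b-a+1) < 1/6$, which is exactly the regime where $1 - 3\gamma < (1-\gamma)/(1+\gamma)$ and $1 + 3\gamma > (1+\gamma)/(1-\gamma)$ hold, so reusing the algebra of Theorem~\ref{theo:r} shows that the event $\set{\abs{\hat{\sigma} - 1/s} \le \gamma/s} \cap \set{\abs{\hat{\nu} - 1/n} \le \gamma/n}$ forces $\abs{\hat{\theta} - \theta} \le 3\gamma\theta \le \varepsilon$. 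Finally, with $\ell \Def \ceil{108\ln(8/\eta)(b-a+1)^2/\varepsilon^2}$ one checks $\ell\alpha^2/3 = \ln(8/\eta)$, so Lemma~\ref{lemma:cramer-quant} bounds each complementary event by $2\exp(-\ell\alpha^2/3) = \eta/4$, and a union bound caps the failure probability at $\eta/2$.

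I expect the main obstacle to be the timing argument rather than the concentration bound: one must argue carefully that broadcasting one coordinate per round in round-robin fashion is equivalent to running \emph{Min} on coordinate $i$ over the subsampled graph sequence, and that subsampling at stride $\ell$ does not destroy the completeness-of-product property\,---\,which it does not, precisely because continuous strong connectivity makes \emph{every} individual $G_t$ strongly connected with self-loops, so any $n-1$ of them still multiply to the complete graph. The concentration part is then a near-mechanical recalibration of constants ($108 = 4 \times 27$ to absorb the halving of $\alpha$, and $8/\eta$ in place of $4/\eta$ to split the budget as $\eta/4 + \eta/4$), together with the single monotonicity remark that lets Lemma~\ref{lemma:cramer-quant} take over from Lemma~\ref{lemma:cramer}.
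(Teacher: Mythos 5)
Your proposal is correct and follows essentially the same route as the paper's proof: the same decomposition into a timing argument (interleaved \emph{Min} instances, with coordinate~$i$ stabilizing by round $i + \bigo{n\ell}$ because products of $n-1$ strongly connected graphs with self-loops are complete) and a concentration argument (commuting $\min$ with the non-decreasing $r_\beta$ so that Lemma~\ref{lemma:summin} and Lemma~\ref{lemma:cramer-quant} apply, with $\gamma = \alpha + \beta + \alpha\beta < \varepsilon/3(b-a+1) < 1/6$ and a union bound giving $\eta/4 + \eta/4 = \eta/2$). Your constant-checking ($\ell\alpha^2/3 = \ln(8/\eta)$) and your slightly sharper stabilization round $(n-1)\ell$ are both consistent with, and marginally tighter than, the paper's stated bound of $\ell n$.
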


\begin{proof}
  We let
  \[
    \dot{\sigma}^{(i)}  \Def \min_{u \in V} r_{\beta}(\sigma_u^{(i)}),
    \qquad
    \dot{\nu}^{(i)}     \Def \min_{u \in V} r_{\beta}(\nu_u^{(i)}),
  \]
  and
  \[
    \dot{\sigma}  \Def \sum_{i = 1}^{\ell} \dot{\sigma}^{(i)} / \ell,
    \qquad
    \dot{\nu}     \Def \sum_{i = 1}^{\ell} \dot{\nu}^{(i)}    / \ell,
    \qquad
    \dot{\theta}  \Def \frac{\dot{\nu}}{\dot{\sigma}} \enspace.
  \]

The main loop of the algorithm~$\algrbar$ consists in running many
	instances of the algorithm \emph{Min}, interleaving their executions so that
 	the variables~$X_u^{(i)}$ and~$Y_u^{(i)}$ are updated at
	rounds~$i, i + \ell, i + 2\ell, \ldots$
Since the topology is continuously strongly connected,
  $X_u^{(i)} (t) = \dot{\sigma}^{(i)}$ and~$Y_u^{(i)} (t) = \dot{\nu}^{(i)}$ for
  every round~$t \geq i + (n-1)\ell$.
Hence, $x_u(t) = a-1+\dot{\nu}/\dot{\sigma} = a-1+\dot{\theta}$
	whenever~$t \geq \ell n$.

Now we show that~$a-1+\dot{\theta}$ lies in the admissible range
  $[ \theta - \varepsilon, \theta + \varepsilon] $ with probability
	at least~$1-\eta/2$.
For that, we proceed as in Theorem~\ref{theo:r}: we reduce the general
	case to the case~$a=1$ by translation.

Since the function $r_\beta$ is non-decreasing, $\min$ and~$r_\beta$ commute.
Therefore, by Lemma~\ref{lemma:summin}, $\dot{\sigma}^{(i)}$
  and~$\dot{\nu}^{(i)}$ are the quantized values of two exponential random
  variables with respective rates~$s$ and~$n$.

We let~$\alpha \Def \varepsilon/6b$
  and~$\gamma \Def \alpha + \beta + \alpha\beta$.
Since~$b \geq a = 1$ and~$\varepsilon \in \left(0,1/2\right)$, we
  have~$0 < \gamma < \varepsilon/3b < 1/6$.
This implies that, if $\abs{\dot{\sigma} - 1/s} \leq \gamma/s$
  and~$\abs{\dot{\nu} - 1/n} \leq \gamma/n$, then
  	$\abs{\dot{\theta} - \theta} \leq 3\gamma\theta < \varepsilon$.

Using Lemma~\ref{lemma:cramer-quant}
  with~$\ell = \ceil{108 \ln(8/\eta) {(b-a+1)}^2/\varepsilon^2}$,
	$\alpha = \varepsilon/6(b-a+1)$, and $\beta = \varepsilon/8(b-a+1)$,
  we obtain $\Prob{\abs{\dot{\sigma} - 1/s} \geq \gamma/s} \leq \eta/4$ and
  $\Prob{\abs{\dot{\nu}    - 1/n} \geq \gamma/n} \leq \eta/4$.
  Therefore,
  \[
    \Prob{\abs{\dot{\theta} - \theta} \leq \varepsilon} \geq 1 - \eta/2
    \enspace.
  \]
\end{proof}

\begin{prop}~\label{prop:rbar-space}
For any real numbers~$\varepsilon \in \left(0,1/2\right)$ and~$\eta \in
	\left(0,1/2\right)$, in any continuously strongly connected network, each entry of the
  	vectors~$X_u$ and~$Y_u$ in algorithm~$\algrbar$ can be
    represented over~$Q = \bigo{\frac{1}{\varepsilon}
      \left(\log n - \log \eta - \log \varepsilon\right)}$
	quantization levels, with probability at least~$1-\eta/2$.
\end{prop}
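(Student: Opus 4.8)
The plan is to reduce the count of quantization levels to a bound on the \emph{range} of the random numbers produced during the execution, and then to control this range from both ends with high probability.

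First I would note that the algorithm~$\algrbar$ only interleaves instances of \emph{Min}, so every value ever held in an entry~$X_u^{(i)}$ or~$Y_u^{(i)}$ is one of the initially generated rounded numbers~$r_\beta(\sigma_w^{(i)})$ or~$r_\beta(\nu_w^{(i)})$ for some agent~$w$; in particular the range of transmitted values never exceeds that of the initial draws. Since~$r_\beta$ takes its values among the integer powers of~$1+\beta$, a common set of quantization levels representing all entries has size at most $\floor{\log_{1+\beta} M} - \floor{\log_{1+\beta} m} + 1 \leq \log_{1+\beta}(M/m) + 1$, where~$M$ and~$m$ are the largest and smallest of the~$2n\ell$ raw numbers~$\sigma_u^{(i)}, \nu_u^{(i)}$. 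It thus suffices to bound~$\ln(M/m)$.

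Next, since all the rates~$\theta_u - a + 1$ and~$1$ lie in~$[1, b-a+1]$, I would bound each side by a union bound over the~$2n\ell$ variables. The upper tail~$\Prob{\sigma_u^{(i)} > t} \leq e^{-t}$ yields $\Prob{M > \ln(8n\ell/\eta)} \leq \eta/4$, and the lower tail~$\Prob{\sigma_u^{(i)} < t} \leq (b-a+1)t$ yields $\Prob{m < \eta/(8n\ell(b-a+1))} \leq \eta/4$. Hence, with probability at least~$1 - \eta/2$, we have $\ln(M/m) \leq \ln\ln(8n\ell/\eta) + \ln(8n\ell(b-a+1)/\eta)$. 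Substituting~$\ell = \ceil{108\ln(8/\eta)(b-a+1)^2/\varepsilon^2}$ and treating the amplitude~$b-a+1$ as a constant, this simplifies to $\ln(M/m) = \bigo{\log n - \log\eta - \log\varepsilon}$. Finally, with~$\beta = \varepsilon/8(b-a+1)$ small we have~$\ln(1+\beta) = \bigomega{\varepsilon}$, so $Q \leq \ln(M/m)/\ln(1+\beta) + 1 = \bigo{\frac{1}{\varepsilon}(\log n - \log\eta - \log\varepsilon)}$, as claimed.

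I would expect the delicate step to be the lower bound on~$m$. The minimum of many exponentials concentrates near zero, and it is only because the total rate is bounded\,---\,there are finitely many agents, each of rate at most~$b-a+1$\,---\,that~$m$ stays above order~$\eta/(n\ell)$. This lower tail contributes a \emph{single} logarithm to~$\ln(M/m)$, whereas the exponential upper tail contributes only a double logarithm; it is therefore the minimum, not the maximum, that fixes the order of~$Q$.
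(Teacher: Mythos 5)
Your proposal is correct and follows essentially the same route as the paper's proof: tail bounds $\Prob{X > t} \leq e^{-t}$ and $\Prob{X < t} \leq (b-a+1)t$ on each of the $2n\ell$ raw exponentials, a union bound to confine them all to an interval with probability at least $1-\eta/2$, and a count of the powers of $1+\beta$ in that interval giving $Q = \bigo{\log_{1+\beta}(\ell n/\eta)} = \bigo{\frac{1}{\varepsilon}\left(\log n - \log\eta - \log\varepsilon\right)}$. The only cosmetic difference is that you control the maximum and the minimum by two separate events of probability $\eta/4$ each, where the paper uses a single interval $\left[z, \ln(1/z)\right]$ and a per-variable bound of $\eta/4\ell n$; your explicit remark that \emph{Min} only ever propagates initially generated values is left implicit in the paper.
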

\begin{proof}
If~$X \sim \Exp(\lambda)$ with~$\lambda \geq 1$, then for any~$z \in (0,1)$,
  \[
    \Prob{X \leq  z} = 1 - e^{-\lambda z} \leq \lambda z
    \ \text{and} \
    \Prob{X > \ln (1/z)} = z^\lambda \leq z \enspace.
  \]
Hence
  \[
    \Prob{X \notin \left[z,\ln 1/z\right]} \leq (1+\lambda) z \enspace.
  \]
In particular, when~$I$ denotes the interval $\left[z,\ln 1/z\right]$
  with $z = \frac{\eta}{4(b-a+2) \ell n} < \frac{1}{16}$, we obtain that, for
  each agent~$u$ and each index $i$,
  	$\Prob{\sigma_u^{(i)} \notin I} \leq \eta/4\ell n$ and
    $\Prob{\nu_u^{(i)}    \notin I} \leq \eta/4\ell n$.
Since the random numbers~$\sigma_u^{(i)}$ and~$\nu_u^{(i)}$ are all independent,
  we deduce that
  \[
    \Prob{\exists u \in V, i \in \set{1, \ldots, \ell} :
    \sigma_u^{(i)} \notin I \vee \nu_u^{(i)} \notin I} \leq \eta/2.
  \]

If all the random numbers~$\sigma_u^{(i)}$ and~$\nu_u^{(i)}$ lie in the
	interval~$\left[c,d\right] \subseteq \R_+$, then they are rounded
  	into the finite set~$r_{\beta}(\left[c,d\right])$, which means
    that~$Q \Def \abs{r_{\beta}\left([c,d]\right)}$ different quantization
    levels are sufficient to represent their logarithmically rounded values.
Since $\abs{r_{\beta}(\left[c,d\right])}
  \leq \ceil{\log_{1+\beta}(d)} - \floor{\log_{1+\beta}(c)}$, we have
  $Q = \bigo{\log_{1+\beta} \left(\ell n/\eta\right)}$ for the~$r_{\beta}$
  roundings of values in the interval 
	$\left[ \frac{\eta}{4(b-a+2) \ell n},
  \ln \left( \frac{4(b-a+2) \ell n}{\eta} \right) \right]$.
Observing that $\beta \in \left(0,1\right)$ and thus
  $\log_{1+\beta} x < 2\log x/\beta$,
	we have $Q = \bigo{(1/\beta)\log\left(\ell n/\eta\right)}$.
With the values of the parameters~$\beta$ and~$\ell$ as defined in the
  algorithm~$\algrbar$, lines 2 and 3, we finally obtain
  \[
    Q = \bigo{\frac{1}{\varepsilon}
    \left(\log n - \log \eta - \log \varepsilon \right)} \enspace.
  \]

\end{proof}

Combining Propositions~\ref{prop:rbar} and~\ref{prop:rbar-space}, we deduce the
  following correctness result for the algorithm~$\algrbar$.

\begin{theo}~\label{theo:rbar}
For any real numbers~$\varepsilon \in \left(0,1/2\right)$
  and~$\eta \in \left(0,1/2\right)$, in any continuously strongly connected network, the
  algorithm~$\algrbar$ $\varepsilon$-computes the average of initial values
  in~$\left[a,b\right]$ in at most~$\ell n$ rounds and using messages in
  $\bigo{\log\left(\log n - \log \eta\right) - \log \varepsilon}$ bits, with
  probability at least~$1 - \eta$.
\end{theo}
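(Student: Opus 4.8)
The plan is to combine Propositions~\ref{prop:rbar} and~\ref{prop:rbar-space} through a single union bound, and then to convert the bound on the number of quantization levels into a bound on the message size in bits. The convergence-time claim comes for free, since the round bound $\ell n$ in Proposition~\ref{prop:rbar} is deterministic under continuous strong connectivity: whenever $t \geq \ell n$, every agent holds $x_u(t) = a - 1 + \dot{\theta}$, independently of the random outcomes, so this part holds in every execution.

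First I would isolate the two ``bad'' events whose simultaneous absence yields both desired properties. Let $B_1$ be the event that the computed estimate $a - 1 + \dot{\theta}$ fails to lie in $[\theta - \varepsilon, \theta + \varepsilon]$, and let $B_2$ be the event that some generated number $\sigma_u^{(i)}$ or $\nu_u^{(i)}$ escapes the interval $I$ of Proposition~\ref{prop:rbar-space}, so that more than $Q$ quantization levels would be required. Proposition~\ref{prop:rbar} gives $\Prob{B_1} \leq \eta/2$ and Proposition~\ref{prop:rbar-space} gives $\Prob{B_2} \leq \eta/2$; since a union bound needs no independence, $\Prob{B_1 \cup B_2} \leq \eta$. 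Hence with probability at least $1 - \eta$ neither event occurs, and on this good event the algorithm $\algrbar$ $\varepsilon$-computes the average within $\ell n$ rounds while each transmitted entry ranges over at most $Q$ levels.

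It then remains to translate $Q = \bigo{\frac{1}{\varepsilon}\left(\log n - \log \eta - \log \varepsilon\right)}$ quantization levels into a message size. Each round transmits the pair $(X_u^{(i)}, Y_u^{(i)})$, and encoding one of $Q$ levels (i.e.\ the exponent $\floor{\log_{1+\beta} x}$) costs $\ceil{\log_2 Q}$ bits, so the message fits in $\bigo{\log Q}$ bits. Writing $A \Def \log n - \log \eta$ and $B \Def -\log \varepsilon$, both positive since $n/\eta > 1$ and $\varepsilon < 1/2$, we get $\log Q = B + \log(A + B) + \bigo{1}$. I would then split into the regimes $A \geq B$ and $A < B$: in the former $\log(A + B) = \bigo{\log A}$, and in the latter $\log(A + B) = \bigo{\log B} = \bigo{B}$ because $B > 1$ forces $\log B \leq B$. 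In both cases $\log Q = \bigo{\log A + B} = \bigo{\log\left(\log n - \log \eta\right) - \log \varepsilon}$, which is the stated message size.

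The argument is almost entirely bookkeeping, so the only step requiring genuine care is this final arithmetic simplification: the term $-\log \varepsilon$ enters $\log Q$ both linearly, through the factor $1/\varepsilon$ in $Q$, and again, more weakly, inside the logarithm via the summand $-\log\varepsilon$ of $A + B$. One must verify that the linear occurrence dominates the logarithmic one so that the weaker contribution can be absorbed, which is exactly what the two-regime split achieves.
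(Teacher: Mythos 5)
Your proposal is correct and follows essentially the same route as the paper, which simply states the theorem as the combination of Propositions~\ref{prop:rbar} and~\ref{prop:rbar-space}: a union bound over the two failure events (each of probability at most $\eta/2$), with the $\ell n$ round bound holding deterministically under continuous strong connectivity. Your explicit conversion of $Q$ quantization levels into $\bigo{\log Q}$ bits, including the two-regime check that the summand $-\log\varepsilon$ inside the logarithm is absorbed by its linear occurrence, is exactly the bookkeeping the paper leaves implicit.
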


As above sketched, the algorithm~$\algrbar$ differs from~$\algr$ in several
  respects.
First, the length~$\ell$ of the random vectors is larger.
This is due to the fact that the concentration inequality in
  Lemma~\ref{lemma:cramer-quant} is looser than in Lemma~\ref{lemma:cramer}.
Moreover, we retain a safety margin of~$\eta/2$ for controlling executions in
  which some of the random numbers generated by the agents lie outside of the
  admissible interval for quantization.

Another discrepancy is that the agents send only one entry of each of the two
	vectors~$X_u$ and~$Y_u$ in each round of~$\algrbar$ while
	they send  the complete vectors in the algorithm~$\algr$.
This sequentialization implemented in the algorithm~$\algrbar$, results in 
  reducing the size of messages by a factor~$\ell$, but at the price of
  augmenting the convergence time by the same factor~$\ell$.

The use of this strategy also entails a stronger sensitivity on network
	connectivity than when broadcasting entire vectors at each round.
Indeed, the convergence of~$X_u^{(i)}$ and~$Y_u^{(i)}$ is now decorrelated from
	that of~$X_u^{(j)}$ and~$Y_u^{(j)}$ for~$j \neq i$.
Global convergence requires that for each index~$i$, the graph products of the form
  $ G_i \circ G_{i+\ell} \circ \cdots \circ G_{i+k.\ell} $ are all complete from some integer~$k$.
This condition is \emph{not} implied, for instance, by  continuous strong
  connectivity with delay~$T$, and indeed an adversary with knowledge of~$\ell$ can pick a
  dynamic graph that is $2$-delayed continuously strongly connected, and for which no
	progress is ever made for some entries of the vectors~$X_u$
  and~$Y_u$.

\section{Decision}\label{sec:decision}
So far, we have been concerned only with the convergence of each
	estimate~$x_u(t)$ to the average~$\theta$.
However, when used as a subroutine, an average consensus algorithm may have to
	return an estimate of the average~$\theta$ to the calling program.
In other words, the agents have to \emph{decide irrevocably} on an estimate of~$\theta$.

Formally, we equip each agent with a decision variable~$d_u$, initialized to~$\bot$.
Agent~$u$ is said to \emph{decide} the first time it writes in~$d_u$.
The corresponding problem is specified as follows:

\begin{description}
\item[Termination] $\forall u \in V,\  \exists t_u, \ \forall t \geq t_u, \ d_u(t) \neq \bot$. 
\item[Irrevocability] $\forall u \in V, \ \forall t \geq 1, \ \forall t'\geq t, \ d_u(t) = \bot \ \vee \  d_u(t) = d_u(t')$.
\item[Validity] $\forall u \in V, \ \forall t \geq 1, \ d_u(t) = \bot \ \vee \ d_u(t) \in \left[\theta - \varepsilon, \theta + \varepsilon\right]$.
 \end{description}

In this section, we seek to augment the algorithms~$\algr$ and $\algrbar$ to solve the above
	problem with high probability.
Our approach relies on the fact that in both algorithms, each agent converges in finite time.

A simple solution consists in providing the agents with a bound~$N \geq n$:
	each agent~$u$ stops executing~$\algr$ and decides at round~$N$.
From Theorem~\ref{theo:r}, it follows that termination, irrevocability,
	and validity hold with probability at least~$1-\eta$.
A similar scheme can be applied to the algorithm~$\algrbar$ and decisions at
	round~$ \ell N$.

Unfortunately, this approach suffers from two major drawbacks.
First, the time complexity of the resulting algorithms is arbitrarily large, as
	it depends on the quality of the bound~$N$.
Second, the decision tests involve the current round number~$t$, and hence
	require that the agents have access to this value, or at least start executing
	their code simultaneously.
Charron-Bost and Moran~\cite{DBLP:conf/stacs/Charron-BostM18} recently showed
	that synchronous starts can be emulated in continuously strongly connected networks,
	but at the price of a \emph{firing} phase of~$n$ additional rounds.

To circumvent the above two problems, we propose another approach that consists
	in using the estimate of~$n$ computed by the
	algorithms~$\algr$ and~$\algrbar$ in the decision tests, and in incorporating the randomized
	firing scheme developed in~\cite{DBLP:conf/stacs/Charron-BostM18} to
	tolerate asynchronous starts.
	
Let us briefly recall their model and techniques.
Each agent is initially \emph{passive}, i.e., it does nothing and
	emits only null messages (heartbeats).
Eventually it becomes \emph{active}, i.e., 
	it starts executing the algorithm.
An active agent~$u$ maintains a \emph{local virtual clock}
	$C_u$ with the following property: under the assumption of a dynamic network
	that is continuously strongly connected, the local clocks remain smaller than~$n$ as
	long as some agents are passive, and when all the agents are active, 
	they get synchronized to some value at most equal to~$n$.
Let~$s_{\max}$ denote the last round with passive agents.
At the end of round~$s_{\max}+n-1$, all agents have the same estimate~$n^*$
	of~$n$, which lies in~$\left[2n/3,3n/2\right]$ with high probability.
Hence, $C_u \geq 3 n^*/2 $ guarantees that $C_u \geq n$, and thus agent~$u$
	can safely decide.

The algorithm~$\algrbard$, given in Algorithm~\ref{alg:rbard}, integrates this
	decision mechanism in the algorithm~$\algr$, with the rounding of
	algorithm~$\algrbar$.

\begin{theo}
For any real numbers~$\varepsilon \in \left(0,1/2\right)$
	and~$\eta \in \left(0,1/2\right)$, in any continuously strongly connected network, with
	probability at least~$1-\eta$, the algorithm~$\algrbard$ decides on values
	within~$\varepsilon$ of the average of initial values in~$\left[a,b\right]$
	in~$s_{\max}+ 2n$ rounds, using messages
	in~$\bigo{\left(\log N - \log \eta / \varepsilon^2\right)
	\left(\log\left(\log N - \log \eta\right) - \log \varepsilon\right)}$ bits.
\end{theo}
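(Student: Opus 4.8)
The plan is to verify the three requirements\,---\,\textbf{Termination}, \textbf{Irrevocability}, and \textbf{Validity}\,---\,on a single high-probability event, and to charge the complement of that event to the failure budget~$\eta$. The event combines three ingredients: (i) the rounded min-estimates concentrate, so that once the network has mixed, the estimate~$x_u$ is within~$\varepsilon$ of~$\theta$, exactly as in Proposition~\ref{prop:rbar}; (ii) the on-line size estimate~$n^{*}$ lands in~$[2n/3,3n/2]$; and (iii)\,---\,the new ingredient forced by irrevocable decision\,---\,these bounds hold not merely at the convergence round but simultaneously at \emph{every} round up to the decision. I would fix a good event~$G$ of probability at least~$1-\eta$ and argue deterministically on~$G$.

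First I would record the behaviour inherited from the firing scheme of~\cite{DBLP:conf/stacs/Charron-BostM18}: the virtual clock satisfies~$C_u(t)<n$ as long as some agent is passive, and once every agent is active (from round~$s_{\max}+1$ on) the clocks are synchronised to a common value at most~$n$ and advance by one per round. In parallel, because each entry~$X_u^{(i)}$, $Y_u^{(i)}$ is updated by the algorithm~\emph{Min}, the local entries are non-increasing in~$t$ and bounded below by the global minima~$\dot{\sigma}^{(i)}$, $\dot{\nu}^{(i)}$; hence the local size estimate~$n^{*}(t)$ is non-decreasing and, on~$G$, never exceeds~$3n/2$. Combined with Proposition~\ref{prop:rbar} (applied with the rounding of~$\algrbar$), this yields, on~$G$, that from round~$s_{\max}+n-1$ on every~$x_u$ is within~$\varepsilon$ of~$\theta$ and~$n^{*}$ has reached its final value in~$[2n/3,3n/2]$.

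The crux\,---\,and the step I expect to be the main obstacle\,---\,is \textbf{safety}: showing that the test~$C_u\ge 3n^{*}/2$ is never met before convergence. The difficulty is that the test is re-evaluated each round against the \emph{transient} estimate~$n^{*}(t)$, which under-estimates~$n$ while information is still spreading; a spuriously small~$n^{*}(t)$ paired with a growing clock could trigger a premature, invalid decision. I would rule this out by forcing the concentration to hold throughout the firing phase: a union bound over the~$\bigo{n}$ rounds preceding convergence and over the~$n$ agents multiplies the per-instance failure probability by~$\bigo{n^{2}}$, so to keep the total below~$\eta$ I take~$\ell=\bigo{(\log n-\log\eta)/\varepsilon^{2}}$, larger by a~$\log n$ factor than in~$\algrbar$; this is precisely the source of the~$\log N$ appearing in the message size. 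On the resulting event the transient~$n^{*}(t)$ tracks the number of agents already heard closely enough that~$3n^{*}(t)/2>C_u(t)$ for every~$t<s_{\max}+n-1$, so no agent fires early; and any firing satisfies~$C_u\ge 3n^{*}/2\ge n$ (using~$n^{*}\ge 2n/3$), which by the clock semantics places the decision at or after the convergence round. \textbf{Validity} then follows from~(i), and \textbf{Irrevocability} is immediate since~$x_u$ is frozen once the entries hold their global minima, so even a re-evaluation of~$d_u$ would return the same value.

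For \textbf{Termination} and the time bound I would note that, once all agents are active, $C_u$ grows without bound while~$n^{*}\le 3n/2$ is constant, so the threshold~$3n^{*}/2\le 9n/4$ is eventually crossed; a direct accounting of the synchronised clock\,---\,started at its post-synchronisation value and advancing by one per round\,---\,places this crossing by round~$s_{\max}+2n$. Finally, the message size is the product of the vector length~$\ell=\bigo{(\log N-\log\eta)/\varepsilon^{2}}$ and the per-entry cost: by Proposition~\ref{prop:rbar-space} (with the bound~$N$ in place of~$n$) each rounded entry uses~$\bigo{\log(\log N-\log\eta)-\log\varepsilon}$ bits, and the clock field adds only~$\bigo{\log N}$ bits; multiplying yields the stated bound~$\bigo{\bigl(\log N-\log\eta/\varepsilon^{2}\bigr)\bigl(\log(\log N-\log\eta)-\log\varepsilon\bigr)}$. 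A last union bound over the failure events in~(i)\,--\,(iii) keeps the overall error probability at most~$\eta$, completing the argument.
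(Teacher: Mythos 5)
Your overall architecture matches the paper's: freezing of the min-variables within $s_{\max}+n$ rounds gives irrevocability, Proposition~\ref{prop:rbar} gives validity of the frozen estimate, Proposition~\ref{prop:rbar-space} (with $N$ in place of $n$) plus the clock field gives the message size, and a final union bound closes the argument. Where you diverge is that the paper never re-proves the firing analysis: it observes that deciding in~$\algrbard$ coincides with firing in the randomized scheme of~\cite{DBLP:conf/stacs/Charron-BostM18} and imports, as black boxes, the two bounds $\Prob{\forall u \in V,\ t_u^d \geq s_{\max}+n} \geq 1-\eta/3$ and $\Prob{\forall u \in V,\ t_u^d \leq s_{\max}+2n} \geq 1-\eta/3$, both valid because $\ell \geq \ceil{243\ln\left(6N^2/\eta\right)}$. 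You instead sketch that analysis inline, and that is where the gaps sit.

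First, your safety step is asserted rather than proved. The claim that ``$n^{*}(t)$ tracks the number of agents already heard closely enough that $3n^{*}(t)/2 > C_u(t)$'' requires the structural lemma at the heart of the firing scheme: if $C_u(t) = c$, then under continuous strong connectivity the entries $Y_u^{(i)}(t)$ already aggregate the samples of at least $\min\set{c+1, n}$ agents, so that on the concentration event $n_u(t) \geq \frac{2}{3}\min\set{c+1,n}$ and the test $C_u > 3n_u/2$ cannot fire while $c < n$. Without this counter-to-coverage link, your union bound over $\bigo{n^2}$ agent-round pairs buys nothing: a transiently small $n_u(t)$ is perfectly consistent with concentration, since it correctly estimates the (small) set of agents aggregated so far. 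Relatedly, your fallback sentence ``any firing satisfies $C_u \geq 3n^{*}/2 \geq n$'' is false during the transient phase, because the test compares $C_u$ against $3n_u(t)/2$ with $n_u(t) \leq n^{*}$ --- premature firings against small transient estimates are precisely the events to be excluded, not a case that resolves itself. Second, your timing accounting does not close: with only $n^{*} \leq 3n/2$, the threshold is $3n^{*}/2 \leq 9n/4$, and the synchronized clock $C_u(t) = t - s_{\max}$ crosses it only by round $s_{\max} + 9n/4$, not $s_{\max}+2n$; the stated bound needs the sharper estimate $n^{*} < 4n/3$, which the choice $\ell \geq \ceil{243\ln\left(6N^2/\eta\right)}$ does support in~\cite{DBLP:conf/stacs/Charron-BostM18} but which your bracket $\left[2n/3, 3n/2\right]$ does not. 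A minor further point: the safety union bound requires only constant relative accuracy of $n_u$, so the corresponding term of $\ell$ is $\bigo{\log N - \log \eta}$ with no $1/\varepsilon^2$ factor; your inflated choice is harmless asymptotically but obscures why the algorithm takes $\ell$ as the maximum of two separately calibrated terms.
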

\begin{proof}
We first observe that all the variables $X_u^{(i)}$, $Y_u^{(i)} $, and $n_u$ are
	stationary.
Since the dynamic graph is continuously strongly connected, their final values do not
	depend on agent~$u$.
Let $t_u^c$ be the first round from which all these variables are constant.
Section~\ref{sec:diygraph} shows that
	\begin{equation}\label{eq:tuc<}
	 t_u^c< s_{\max}+n \enspace.
	 \end{equation}
We let $d_u^* = d_u \big( t_u^c \big)$ and $n^* = n_u \big( t_u^c \big)$.

From~\cite{DBLP:conf/stacs/Charron-BostM18}, we know that the counters~$C_u$
		satisfy the following:
		\begin{enumerate}[label = (\roman*)]
		\item $\forall t \leq s_{\max}, \ \ C_u(t) < n$;
		\item $\exists t_0 \in \set{s_{\max}+1, \ldots, s_{\max}+n-1},
			\forall u\in V, \forall t \geq t_0, C_u(t) = t - s_{\max}$.
		\end{enumerate}
Since $n_u$ is upper bounded by $n^*$, the property (ii) entails that the
	agent~$u$ eventually decides. 
Hence, the termination property is ensured.
Let $t_u^d$ denote the first round at which the agent~$u$ decides,
		i.e., the first round such that
		\[
			C_u(t_u^d) > 3 \, n_u(t_u^d)/2 \enspace.
		\]

Observing that deciding in~$\algrbard$ coincides with firing in the randomized algorithm 
	in~\cite{DBLP:conf/stacs/Charron-BostM18}, the first part of the correctness proof of the latter algorithm shows that 
	\begin{equation} \label{eq:tud>n}
		\Prob{\forall u \in V, \ t_u^d \geq s_{\max} +n}~\geq~1-\eta/3 
	\end{equation}
	since $\ell \geq \ceil{243 \ln \left(6 N^2/\eta\right)}$.
Combined with~(\ref{eq:tuc<}), we obtain 
\[
	\Prob{\forall u \in V, \ t_u^d > t_u^c}~\geq~1 - \eta/3 \enspace.
\]
Because of the definition of $t_u^c$, decisions in~$\algrbard$ are thus irrevocable with probability at least $1 - \eta/3$.

The proof of the randomized firing algorithm also shows that if
	$\ell \geq \ceil{243 \ln \left(6 N^2/\eta\right)}$, then
	\begin{equation} \label{eq:tud>2n}
		\Prob{\forall u \in V, \ t_u^d \leq s_{\max} + 2 n}~\geq~1-\eta/3 \enspace.
	\end{equation}
In other words, all the agents decide by round $ s_{\max} + 2 n $ with probability at least $ 1 - \eta/3$.

Moreover, we observe that the computation of the estimate of $\theta$ in~$\algrbard$ corresponds to 
	the algorithm~$\algrbar$.
Then, Proposition~\ref{prop:rbar} shows that 
	\begin{equation}
		\Prob{\forall u \in V, \ d_u^* \in
			\left[ \theta-\varepsilon, \theta + \varepsilon \right]}~\geq~1-\eta/6
	\end{equation}
	since $\ell \geq \ceil{108\ln(24 /\eta){(b - a + 1)}^2/\varepsilon^2} $.
It follows that the validity property holds with probability at least $ 1 - \eta/6$.	

As opposed to~$\algrbar$, each agent~$u$ sends all the entries of $X_u$ and
	$Y_u$ in the messages of the algorithm~$\algrbard$. 
Moreover, the above argument shows that the agent~$u$ can stop sending $C_u$
	when it has decided.
Hence, in correct executions where the agents decide in linear time in~$n$, the
	counters~$C_u$ can be represented over~$\bigo{\log N}$ bits.

Reasoning as in Proposition~\ref{prop:rbar-space}, each entry of the
	vectors~$X_u$ and~$Y_u$ can be represented
	over~$Q = \bigo{\frac{1}{\varepsilon}\log\frac{\ell n}{\eta}}$ quantization
	levels with probability~$1-\eta/6$, where~$\ell = \bigo{\log N - \log
	\eta/\varepsilon^2}$.
Therefore, each message of~$\algrbard$ uses~$\bigo{\ell \log Q}$ bits with
	probability~$1-\eta/6$.
	
By the union bound over the latter four events, we obtain that all the agents in
	the algorithm~$\algrbard$ decide on values in the range
	$[\theta-\varepsilon, \theta + \varepsilon ]$ by round $s_{\max} + 2 n$ using
	messages of size in~$\bigo{\left(\log N - \log \eta / \varepsilon^2\right)
	\left(\log\left(\log N - \log \eta\right) - \log \varepsilon\right)}$ bits,
	with probability at least~$1-\eta$.
\end{proof}

\begin{algorithm}[h]
	\caption{The algorithm $\algrbard$, code for agent~$u$}~\label{alg:rbard}
	\begin{algorithmic}[1]
	\State{\textbf{Input:} $\theta_u \in \left[a,b\right]$
	}
	\State{$\ell \gets
		\max \set{ \ceil{108\ln(24 /\eta){(b - a + 1)}^2/\varepsilon^2} , \ceil{243 \ln \left(6N^2/\eta\right)}}$
	}
	\State{$\beta \gets \varepsilon/8(b-a+1)$
	}
	\State{Generate~$\ell$ random numbers
		$\sigma_u^{(1)} \ldots, \sigma_u^{(\ell)}$
		from an exponential distribution of rate~$\theta_u - a + 1$.
	}
	\State{$X_u \gets (r_{\beta}(\sigma_u^{(1)}),
		\ldots, r_{\beta}(\sigma_u^{(\ell)}))$
	}
	\State{Generate~$\ell$ random numbers $\nu_u^{(1)} \ldots, \nu_u^{(\ell)}$
		from an exponential distribution of rate~$1$.
	}
	\State{$Y_u \gets (r_{\beta}(\nu_u^{(1)}),
		\ldots, r_{\beta}(\nu_u^{(\ell )}))$
	}
	\State{$d_u \gets \bot$
	}
	\State{$C_u \gets 0$
	}
	\State{\textbf{In each round} \textbf{do}}
	\State{Send $\left \langle C_u,X_u,Y_u\right \rangle$ to all and receive one message from each in-neighbor.
	}
	\If{at least one received message is null}
	\State{$C_u \gets 0$
	}
	\Else{}
	\State{$C_u \gets 1 + \min\set{C_{v_1} , \ldots, C_{v_k} }$ }
	\EndIf{}
	\For{$i = 1 \ldots \ell$}
	\State{$X_u^{(i)} \gets \min \set{X_{v_1}^{(i)}, \ldots, X_{v_k}^{(i)}}$
	}
	\State{$Y_u^{(i)} \gets \min \set{Y_{v_1}^{(i)}, \ldots, Y_{v_k}^{(i)}}$
	}
	\EndFor{}
	\State{$n_u \gets \ell/(Y_u^{(1)} + \cdots + Y_u^{(\ell)})$}
	\If{$C_u > 3 \, n_u/2$}
	\State{$d_u \gets a - 1 + (Y_u^{(1)} + \cdots + Y_u^{(\ell)})/(X_u^{(1)} + \cdots + X_u^{(\ell)}) $
	}
	\EndIf{}
	\end{algorithmic}
\end{algorithm}

\section{Conclusion}\label{sec:conclusion}
The design of average consensus algorithms is constrained by fundamental
	limitations on computable functions.
In a networked system of deterministic agents that communicate by broadcast
	without knowledge of their out-degrees, average consensus essentially requires
	central coordination or global information on the network.
Indeed, although much progress has been made over the past decades, average
	consensus algorithms generally continue to rely on assumptions such as
	bidirectional links, an upper bound on the number of agents known to all
	agents, knowledge of the agent the out-degrees\ldots

Furthermore, most average consensus algorithms are proved correct under the
	condition that agents are able to store and transmit real numbers, which is a
	highly idealized situation.
The above issues hinder the widespread application of many existing average
	consensus algorithms.

We have proposed a Monte Carlo algorithm that achieves average consensus with
	high probability, in linear time, and performs well under limited assumptions
	on the network.
This algorithm can be coupled with a rounding procedure that allows for
	working with quantized values, with a space complexity growing with
	$\log\log\,n$ asymptotically.
In this form, the algorithm only computes the average in the sense that every
	agent converges towards an estimate of the average in finite time.
However, if we provide the agents with an upper bound on the size of the
	network, the algorithm can be augmented in a way that allows the agents to
	eventually decide irrevocably on their estimate.

This method has its own shortcomings: specifically, the restitution of the
	quantized values requires an infinite computing precision.
Nonetheless, the comparison with existing average algorithms is favorable in
	many respects.
In particular, our algorithm converges in linear time in the size of the
	network, tolerates communication channels with finite capacity, and can be
	augmented with irrevocable decisions on the same estimate of the average.
As such, it offers an example of using randomization to circumvent fundamental
	limitations in distributed computing.

\section*{Acknowledgements}
The authors would like to thank Shlomo Moran for fruitful discussions and
  remarks that greatly helped us improve this work.

\bibliography{bibliography}
\bibliographystyle{ieeetr}
\end{document}